\documentclass[3p,twocolumn]{elsarticle}

\usepackage{grproc}
\usepackage{lineno,hyperref}
\modulolinenumbers[5]

\journal{Information Processing Letters}









\bibliographystyle{elsarticle-num}

\begin{document}

\begin{frontmatter}

\title{Unifying Theories of Time with Generalised Reactive Processes}

\author[cor1]{Simon Foster}
\ead{simon.foster@york.ac.uk}
\author{Ana Cavalcanti}
\author{Jim Woodcock}
\author{Frank Zeyda}

\cortext[cor1]{Corresponding author}

\address{Department of Computer Science, University of York, York, YO10 5DD, United Kingdom}

\begin{abstract}
  Hoare and He's theory of reactive processes provides a unifying foundation for the formal semantics of concurrent and
  reactive languages. Though highly applicable, their theory is limited to models that can express event histories as
  discrete sequences. In this paper, we show how their theory can be generalised by using an abstract trace algebra. We
  show how the algebra, notably, allows us to also consider continuous-time traces and thereby facilitate models of
  hybrid systems. We then use this algebra to reconstruct the theory of reactive processes in our generic setting, and
  prove characteristic laws for sequential and parallel processes, all of which have been mechanically verified in the
  Isabelle/HOL proof assistant.
\end{abstract}

\begin{keyword}
formal semantics \sep hybrid systems \sep process algebra \sep unifying theories \sep theorem proving
\end{keyword}

\end{frontmatter}


\section{Introduction}
\label{sec:intro}
The theory of reactive processes provides a generic foundation for denotational semantics of concurrent languages. It
was created as part of the Unifying Theories of Programming (UTP)~\cite{Hoare&98,Cavalcanti&06} framework, which models
computation using predicate calculus. The theory of reactive processes unifies formalisms such as CSP~\cite{Hoare85},
ACP~\cite{Bergstra84}, and CCS~\cite{Milner89}. This is made possible by its support of a large set of algebraic
theorems that universally hold for families of reactive languages. The theory has been extended and applied to several
languages including, stateful~\cite{Oliveira&09} and real-time languages, with both discrete~\cite{Wei2013} and
continuous time~\cite{Foster16b,He94}.

Technically, the theory's main feature is its trace model, which provides a way for a process to record an interaction
history, using an observational variable $tr : \seq~\emph{Event}$. In the original presentation, a trace is a discrete
event sequence, which is standard for languages like CSP. The alphabet can be enriched by adding further observational
variables; for example, $ref : \power~\emph{Event}$ to model refusals~\cite{Hoare&98}.

Though sequence-based traces are ubiquitous for modelling concurrent systems, other models exist. In particular, the
sequence-based model is insufficient to represent continuous evolution of variables as present in hybrid systems. A
typical notion of history for continuous-time systems are real-valued trajectories $\real_{\ge 0} \to \Sigma$ over
continuous state $\Sigma$.



Although the sequence and trajectory models appear substantially different, there are many similarities. For example, in
both cases one can subdivide the history into disjoint parts that have been contributed by different parts of the
program, and describe when a trace is a prefix of another. By characterising traces abstractly, and thus unifying
these different models, we provide a generalised theory of reactive processes whose properties, operators, and laws can
be transplanted into an even wider spectrum of languages. We thus enable unification of untimed, discrete-time, and
continuous-time languages. The focus of our theory is on traces of finite length, but the semantic framework is
extensible.

We first introduce UTP and its applications (\S{\ref{sec:bg}}). We then show how traces can be characterised
algebraically by a form of cancellative monoid (\S{\ref{sec:traces}}), and that this algebra encompasses both sequences
and piecewise continuous functions (\S{\ref{sec:models}}). We apply this algebra to generalise the theory of reactive
processes, and show that its key algebraic laws are retained in our generalisation, including those for sequential and
parallel composition (\S{\ref{sec:grp}}).

Our work is mechanised in our theorem prover,
Isabelle/UTP\footnote{\url{https://github.com/isabelle-utp/utp-main}}~\cite{Foster16a}, a semantic embedding of UTP in
Isabelle/HOL. We sometimes give proofs, but these merely illustrate the intuition, with the mechanisation being
definitive. To the best of our knowledge, this is the most comprehensive mechanised account of reactive processes.

\section{Background}
\label{sec:bg}
UTP is founded on the idea of encoding program behaviour as relational predicates whose variables correspond to
observable quantities. Unprimed variables ($x$) refer to observations at the start, and primed variables ($x'$) to
observations at a later point of the computation. The operators of a programming language are thus encoded in
predicate calculus, which facilitates verification through theorem proving. For example, we can specify sequential
programming operators as relations:
\begin{align*}
  x := v       & ~~\defs~~ x' = v \land y_1' = y_1 \land \cdots \land y_n' = y_n \\
  P \relsemi Q & ~~\defs~~ \exists x_0 @ P[x_0/x'] \land Q[x_0/x] \\
  \conditional{P}{b}{Q} & ~~\defs~~ (b \wedge P) \vee (\neg b \wedge Q)
\end{align*}
\noindent Assignment $x := v$ states that $x'$ takes the value $v$ and all other variables are unchanged. We define the
degenerate form $\II \defs x := x$, which identifies all variables. Sequential composition $P \relsemi Q$ states that
there exists an intermediate state $x_0$ on which $P$ and $Q$ agree. If-then-else conditional $\conditional{P}{b}{Q}$
states that if $b$ is true, $P$ executes, otherwise $Q$.

UTP variables can either encode program data, or behavioural information, in which case they are called
\emph{observational} variables. For example, we may have $ti, ti' : \mathbb{R}_{\ge 0}$ to record the time before
and after execution. These exist to enrich the semantic model and are constrained by \emph{healthiness
  conditions} that restrict permissible behaviours. For example, we can impose $ti \le ti'$ to forbid reverse time
travel.

Healthiness conditions are expressed as functions on predicates, such as $\ckey{HT}(P) \defs P \land ti \le ti'$,
the application of which coerces predicates to healthy behaviours. When such functions are idempotent and monotonic,
with respect to the refinement order $\refinedby$, we can show, with the aid of the Knaster-Tarski theorem, that their
image forms a complete lattice, which allows us to reason about recursion.

Healthiness conditions are often built from compositions:
$\healthy{H} \defs \healthy{H}_1 \circ \healthy{H}_2 \circ \cdots \circ \healthy{H}_n$.  In this case, idempotence and
monotonicity of $\healthy{H}$ can be shown by proving that each $\healthy{H}_i$ is monotonic and idempotent, and each
$\healthy{H}_i$ and $\healthy{H}_j$ commute. A set of healthy fixed-points,
$\carrier{\healthy{H}} \defs \{P | \healthy{H}(P) = P\}$, is called a \emph{UTP theory}. Theories isolate the
aspects of a programming language, such as concurrency, object orientation, and real-time programming. Theories can also
be combined by composing their healthiness conditions to enable construction of sophisticated heterogeneous and
integrated languages.

Our focus is the theory of reactive processes, with healthiness condition $\healthy{R}$, which we formalise in
Section~\ref{sec:grp}. Reactive programs, in addition to initial and final states, also have intermediate states, during
which the process waits for interaction with its environment. $\healthy{R}$ specifies that processes yield well-formed
traces, and that, when a process is in an intermediate state, any successor must wait for it to terminate before
interacting. This theory uses observational variable $wait$ to differentiate intermediate from final states, and $tr$
to record the trace.

UTP theories based on reactive processes have been applied to give formal semantics to a variety of
languages~\cite{Hoare&98,Smith2005,Zhu2015}, notably the \Circus formal modelling language family~\cite{Oliveira&09},
which combines stateful modelling, concurrency, and discrete time~\cite{Wei2013,Woodcock14}. A similar theory has been
used for a hybrid variant of CSP~\cite{He94}, with a modified notion of trace. Though sharing some similarities, these
various versions of reactive processes are largely disjoint theories with distinct healthiness conditions. Our
contribution is to unify them all under the umbrella of \emph{generalised reactive processes}.

\section{Trace Algebra}
\label{sec:traces}
In this section, we describe the trace algebra that underpins our generalised theory of reactive processes. We define
traces as an abstract set $\tset$ equipped with two operators: trace concatenation $\tcat : \tset \to \tset \to \tset$,
and the empty trace $\tempty : \tset$, which obey the following axioms.

\begin{definition}[Trace Algebra] \label{def:tralg} A trace algebra $(\tset, \tcat, \tempty)$ is a cancellative monoid satisfying the
  following axioms:
\begin{align*}
  x \tcat (y \tcat z) &= (x \tcat y) \tcat z \tag{TA1} \label{law:tassoc} \\
  \tempty \tcat x = x \tcat \tempty &= x \tag{TA2} \label{law:tunit} \\
  x \tcat y = x \tcat z ~~\implies~~  y &= z \tag{TA3} \label{law:tcancl1} \\
  x \tcat z = y \tcat z ~~\implies~~  x &= y \tag{TA4} \label{law:tcancl2} \\
  x \tcat y = \tempty  ~~\implies~~  x &= \tempty \tag{TA5} \label{law:tnai}
\end{align*}
\end{definition}

\noindent As expected, $\tcat$ is associative and has left and right units.  Axioms~\ref{law:tcancl1} and
\ref{law:tcancl2} show that $\tcat$ is injective in both arguments. As an aside, \ref{law:tcancl1} and \ref{law:tcancl2}
hold only in models without infinitely long traces, as such a trace $x$ would usually annihilate $y$ in $x \tcat
y$. Axiom~\ref{law:tnai} states that there are no ``negative traces'', and so if $x$ and $y$ concatenate to $\tempty$ then
$x$ is $\tempty$. We can also prove the dual law: $x \tcat y = \tempty ~~\implies~~ y = \tempty$. From this algebraic
basis, we derive a prefix relation and subtraction operator.

\begin{definition}[Trace Prefix and Subtraction] \label{def:trops}
\begin{align*}
  x \le y &~~\iff~~ \exists z @ y = x \tcat z \\[2ex]
  y \tminus x &~~\defs~~ \begin{cases} \iota z @ y = x \tcat z & \text{if}~x \le y \\ \tempty & \text{otherwise} \end{cases}
\end{align*}
\end{definition}

\noindent Trace prefix, $x \le y$, requires that there exists $z$ that extends $x$ to yield $y$. Trace subtraction $y
\tminus x$ obtains that trace $z$ when $x \le y$, using the definite description operator (Russell's $\iota$), and
otherwise yields the empty trace. This is slightly different from the standard UTP operator, which is defined only when
$x \le y$. We can prove the following laws about trace prefix.

\begin{theorem}[Trace Prefix Laws] For $x, y, z : \tset$,
  \begin{align*}
    (\tset, \le) & \text{ is a partial order} \tag{TP1} \label{law:TP1} \\
    \tempty &\le x \tag{TP2} \label{law:tp0le} \\
    x &\le x \tcat y \tag{TP3} \label{law:TP3} \\
    x \tcat y \le x \tcat z &\iff y \le z \tag{TP4} \label{law:TP4}
  \end{align*}
\end{theorem}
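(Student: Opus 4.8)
The plan is to establish each of the four laws directly from the axioms of the trace algebra (Definition~\ref{def:tralg}) and the definition of prefix (Definition~\ref{def:trops}). I would tackle them roughly in the order \ref{law:tp0le}, \ref{law:TP3}, \ref{law:TP4}, and finally \ref{law:TP1}, since the partial-order law is the most involved and will reuse the others.

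For \ref{law:tp0le}, I observe that $x = \tempty \tcat x$ by \ref{law:tunit}, so taking the witness $z := x$ in the definition of $\le$ immediately gives $\tempty \le x$. For \ref{law:TP3}, the witness $z := y$ works directly, since $x \tcat y = x \tcat y$ trivially exhibits $y$ as the required extension. Both are one-line existential introductions.

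For \ref{law:TP4}, I would prove the two directions separately. The right-to-left direction is the easy one: if $y \le z$, say $z = y \tcat w$, then $x \tcat z = x \tcat (y \tcat w) = (x \tcat y) \tcat w$ by associativity \ref{law:tassoc}, witnessing $x \tcat y \le x \tcat z$. The left-to-right direction is where the cancellation axioms earn their keep: from $x \tcat y \le x \tcat z$ we get some $w$ with $x \tcat z = (x \tcat y) \tcat w = x \tcat (y \tcat w)$, and then left-cancellation \ref{law:tcancl1} yields $z = y \tcat w$, hence $y \le z$.

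The main obstacle is \ref{law:TP1}, proving that $(\tset, \le)$ is a partial order, which requires reflexivity, transitivity, and antisymmetry. Reflexivity follows from \ref{law:TP3} (or \ref{law:tp0le}) with the empty extension via \ref{law:tunit}. Transitivity is a straightforward chaining of witnesses using associativity. Antisymmetry is the delicate case and the real crux: assuming $x \le y$ and $y \le x$, I obtain witnesses with $y = x \tcat z_1$ and $x = y \tcat z_2$, so $x = x \tcat (z_1 \tcat z_2)$ by \ref{law:tassoc}; then comparing with $x = x \tcat \tempty$ from \ref{law:tunit} and applying left-cancellation \ref{law:tcancl1} gives $z_1 \tcat z_2 = \tempty$, whence the no-negative-traces axiom \ref{law:tnai} forces $z_1 = \tempty$, and therefore $y = x \tcat \tempty = x$. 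This is the one place where \ref{law:tnai} is indispensable, and pinning down exactly how cancellation and \ref{law:tnai} combine is the step I would expect to require the most care.
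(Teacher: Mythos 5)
Your proof is correct and follows the natural derivation from the trace algebra axioms; the paper itself gives no written proof for this theorem (deferring to its Isabelle/UTP mechanisation), and your argument---existential witnesses for \ref{law:tp0le} and \ref{law:TP3}, cancellation via \ref{law:tcancl1} for \ref{law:TP4}, and the combination of left-cancellation with \ref{law:tnai} and its dual for antisymmetry---is exactly the standard route one would take.
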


\noindent \ref{law:tp0le} tells us that $\tempty$ is the smallest trace, \ref{law:TP3} that concatenation builds larger
traces, and \ref{law:TP4} that concatenation is monotonic in its right argument. We also have the following trace
subtraction laws.


\begin{theorem}[Trace Subtraction Laws] \label{thm:trsubtract}
\begin{align*}
  x \tminus \tempty &~=~ x \tag{TS1} \label{law:tsright0} \\
  \tempty \tminus x &~=~ \tempty \tag{TS2} \\
  x \tminus x &~=~ \tempty \tag{TS3} \label{law:TS3} \\
  (x \tcat y) \tminus x &~=~ y \tag{TS4} \label{law:tscat} \\
  (x \tminus y) \tminus z &~=~ x \tminus (y \tcat z) \tag{TS5} \label{law:TS5} \\
  (x \tcat y) \tminus (x \tcat z) &~=~ y \tminus z \tag{TS6} \label{law:TS6} \\
  y \le x \land x \tminus y = \tempty &~\iff~ x = y \tag{TS7} \label{law:TS7} \\
  x \le y &~\implies~  x \tcat (y \tminus x) = y \label{law:TS8} \tag{TS8}
\end{align*}
\end{theorem}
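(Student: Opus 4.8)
The plan is to base everything on the well-definedness of subtraction. Whenever $x \le y$, Definition~\ref{def:trops} supplies a witness $z$ with $y = x \tcat z$, and left cancellation~\ref{law:tcancl1} makes it unique, so that $y \tminus x$ is precisely the unique $z$ satisfying $y = x \tcat z$. Substituting this description back in immediately yields the defining equation~\ref{law:TS8}: under the hypothesis $x \le y$ we obtain $x \tcat (y \tminus x) = y$. I would prove~\ref{law:TS8} first, as it is the workhorse for the remaining laws.

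With that in hand, most laws reduce to short case analyses over the prefix relation, appealing to the prefix laws~\ref{law:TP1}--\ref{law:TP4}. For~\ref{law:tscat}, since $x \le x \tcat y$ by~\ref{law:TP3}, the witness is visibly $y$, so uniqueness gives $(x \tcat y) \tminus x = y$; specialising to $y = \tempty$ via~\ref{law:tunit} then yields~\ref{law:TS3}. Law~\ref{law:tsright0} follows because $\tempty \le x$ by~\ref{law:tp0le} and $\tempty \tcat z = z$. For the law $\tempty \tminus x = \tempty$ I would split on whether $x \le \tempty$: if it holds then~\ref{law:tnai} forces $x = \tempty$ and the result follows from~\ref{law:TS3}, and otherwise the ``otherwise'' branch of the definition returns $\tempty$ directly. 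Law~\ref{law:TS7} is then immediate: its backward direction uses reflexivity (from~\ref{law:TP1}) and~\ref{law:TS3}, while its forward direction feeds $x \tminus y = \tempty$ into~\ref{law:TS8} and cancels the unit via~\ref{law:tunit}.

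For~\ref{law:TS6} I would case-split on $z \le y$. By monotonicity~\ref{law:TP4}, $z \le y$ is equivalent to $x \tcat z \le x \tcat y$, so both sides lie simultaneously in the defined or the undefined regime. In the defined case, writing $y = z \tcat w$ gives $x \tcat y = (x \tcat z) \tcat w$ by associativity~\ref{law:tassoc}, and~\ref{law:tscat} collapses each side to $w$; otherwise both sides are $\tempty$.

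The main obstacle is~\ref{law:TS5}, since the prefix conditions governing the inner and outer subtractions interact. The key auxiliary fact I would establish is that, when $y \le x$, the prefix $z \le (x \tminus y)$ holds if and only if $(y \tcat z) \le x$. This follows from~\ref{law:TP4}, which gives $z \le (x \tminus y) \iff y \tcat z \le y \tcat (x \tminus y)$, together with~\ref{law:TS8}, which identifies $y \tcat (x \tminus y)$ with $x$. Given this equivalence, a nested case analysis aligns the two sides. When $y \le x$ and $z \le (x \tminus y)$, setting $w = x \tminus y$ and $w = z \tcat v$ reduces the left-hand side to $v$ by~\ref{law:tscat}, while $x = (y \tcat z) \tcat v$ by~\ref{law:tassoc} reduces the right-hand side to $v$ as well. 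When $y \not\le x$, both sides are $\tempty$, using~\ref{law:TP3} to see that $y \tcat z \not\le x$; and when $y \le x$ but $z \not\le (x \tminus y)$, the auxiliary equivalence makes both sides $\tempty$ again. On paper the care lies entirely in checking that these ``otherwise'' branches coincide, which is exactly what the equivalence guarantees; the mechanisation discharges the splits automatically.
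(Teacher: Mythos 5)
Your proof is correct, and it takes the natural route: establishing that $y \tminus x$ is the unique witness for $x \le y$ via left cancellation~\ref{law:tcancl1}, deriving~\ref{law:TS8} and~\ref{law:tscat} first, and handling~\ref{law:TS5} by the equivalence $z \le (x \tminus y) \iff y \tcat z \le x$ so that the ``otherwise'' branches align. The paper gives no written proof for this theorem, deferring entirely to the Isabelle/UTP mechanisation, so there is nothing to contrast with; your argument is exactly the derivation the mechanisation would discharge.
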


\noindent Laws \ref{law:tsright0}-\ref{law:TS3} relate trace subtraction and the empty trace. \ref{law:tscat} shows that
subtraction inverts concatenation. \ref{law:TS5} shows that subtracting two traces is equivalent to subtracting their
concatenation. \ref{law:TS6} shows that subtraction can be used to remove a common prefix. \ref{law:TS7} shows that two
traces are equal if, and only if, the first is a prefix of the second and they subtract to $\tempty$. \ref{law:TS8}
shows that a trace can be split into its prefix and suffix.

In the next section, we show that standard notions of traces are models. Afterwards, in Section~\ref{sec:grp} we use the
algebra to create the generalised theory of reactive processes.



\section{Trace Models}
\label{sec:models}
In this section we describe three trace models: positive reals, finite sequences, and timed traces. Other models are
possible; for example, we can further extend timed traces to ``super-dense time''~\cite{Lee2014} to encompass multiple
distinguished discrete state updates at a time instant. We leave study of other models as future work.

Positive real numbers $\mathbb{R}_{\ge 0}$ form one of the simplest models of the trace algebra.

\begin{theorem} $(\mathbb{R}_{\ge 0}, +, 0)$ is a trace algebra. \end{theorem}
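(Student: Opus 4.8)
The plan is to verify that $(\mathbb{R}_{\ge 0}, +, 0)$ satisfies each of the five axioms \ref{law:tassoc}--\ref{law:tnai} from Definition~\ref{def:tralg}, identifying $\tcat$ with addition $+$ and $\tempty$ with $0$. Since all the axioms concern equalities and implications among arithmetic expressions over the nonnegative reals, this is essentially a matter of citing standard properties of real addition, restricted to the subset $\mathbb{R}_{\ge 0}$.

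First I would dispatch the monoid axioms. Associativity \ref{law:tassoc}, $x + (y + z) = (x + y) + z$, is immediate from associativity of real addition. The unit law \ref{law:tunit}, $0 + x = x + 0 = x$, follows since $0$ is the additive identity; note that $0 \in \mathbb{R}_{\ge 0}$, so the carrier is genuinely closed under the operation and contains the unit. Closure of $\mathbb{R}_{\ge 0}$ under $+$ itself deserves a one-line remark, since the sum of two nonnegative reals is nonnegative.

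Next I would handle the two cancellation axioms. Both \ref{law:tcancl1} ($x + y = x + z \implies y = z$) and \ref{law:tcancl2} ($x + z = y + z \implies x = y$) follow at once from the cancellativity of addition on $\mathbb{R}$, obtained simply by subtracting the common summand; crucially, the result of that subtraction stays within $\mathbb{R}_{\ge 0}$ because we are recovering one of the original nonnegative operands, so no element outside the carrier is invoked.

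The only axiom requiring genuine use of positivity is \ref{law:tnai}: $x + y = 0 \implies x = 0$. Here the restriction to $\mathbb{R}_{\ge 0}$ is essential, and this is the step I expect to be the conceptual crux, though it remains elementary. The argument is that $x, y \ge 0$ together with $x + y = 0$ forces $x = 0$, since $x = -y \le 0$ combined with $x \ge 0$ yields $x = 0$. I would present this as the one place where the sign constraint does real work, contrasting it with the full group $(\mathbb{R}, +, 0)$, which fails \ref{law:tnai}. Having verified all five axioms, I would conclude that $(\mathbb{R}_{\ge 0}, +, 0)$ is a trace algebra, noting that the derived prefix relation coincides with the usual order $\le$ on the reals and subtraction with truncated (nonnegative) subtraction.
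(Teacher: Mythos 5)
Your proposal is correct and follows essentially the same route as the paper's own proof: verify the monoid axioms from standard properties of real addition, note cancellativity, and use nonnegativity to rule out additive inverses for \ref{law:tnai}. The paper states this more tersely, but the content is identical.
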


\begin{proof}
  $+$ is clearly associative, cancellative, and has $0$ as its left and right unit. Moreover, since $+$ is commutative
  and $\mathbb{R}_{\ge 0}$ contains no negative numbers then $+$ has no additive inverse.
\end{proof}

Positive reals can be used to express timed programs with a clock variable $time : \mathbb{R}_{\ge
  0}$~\cite{Hayes2010}. Finite sequences, unsurprisingly, also form a trace algebra, when we set 
$\tcat$ to sequence concatenation ($\cat$), and $\tempty$ to the empty sequence ($\langle\rangle$).

\begin{theorem} $(\seq\,\textit{Event}, \cat, \langle\rangle)$ is a trace algebra. \end{theorem}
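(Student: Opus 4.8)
The plan is to verify that $(\seq\,\textit{Event}, \cat, \langle\rangle)$ satisfies each of the five axioms \ref{law:tassoc}--\ref{law:tnai} of Definition~\ref{def:tralg}. Most of these are standard and well-known properties of finite sequences, so the proof is largely a matter of citing or recalling them in the right form. I would organise the verification axiom by axiom, treating the cancellation laws together since they share the same inductive flavour.

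First, associativity \ref{law:tassoc} and the unit laws \ref{law:tunit} are immediate: sequence concatenation is associative, and the empty sequence $\langle\rangle$ is both a left and a right identity. For the cancellation axioms \ref{law:tcancl1} and \ref{law:tcancl2}, the key observation is that if $x \cat y = x \cat z$, then the two sides agree on every index; since $x$ contributes a fixed prefix of length $|x|$, the remaining suffixes $y$ and $z$ must be equal position-by-position, giving $y = z$. The right-cancellation \ref{law:tcancl2} is symmetric, comparing suffixes of length $|z|$ and equating the prefixes $x$ and $y$; here one also uses that $x \cat z = y \cat z$ forces $|x| = |y|$. For the no-negative-traces axiom \ref{law:tnai}, note that $|x \cat y| = |x| + |y|$, so $x \cat y = \langle\rangle$ forces $|x| = 0$, hence $x = \langle\rangle$.

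I expect no genuine obstacle here, since every clause reduces to an elementary fact about lengths and indices of finite sequences. The only mild subtlety worth stating carefully is the length reasoning underlying the cancellation laws: both \ref{law:tcancl1} and \ref{law:tcancl2} rely on the fact that equal concatenated sequences have equal total length, from which one can peel off a matching prefix or suffix. In the mechanisation these follow directly from Isabelle/HOL's library lemmas on \texttt{append} (such as \texttt{append\_eq\_append\_conv} and \texttt{same\_append\_eq}), so the formal proof is essentially a one-line \texttt{simp} or \texttt{auto} invocation per axiom after unfolding the trace-algebra definition.
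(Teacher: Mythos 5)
Your proposal is correct and matches the intended argument: the paper omits a proof for this theorem entirely, treating it as routine and deferring to the Isabelle/UTP mechanisation, and your axiom-by-axiom verification via standard length and prefix/suffix reasoning (or the corresponding HOL library lemmas on \texttt{append}) is exactly the expected justification. No gaps.
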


\noindent Though simple, we note that the sequence-based trace model has been shown to be sufficient to characterise
both untimed~\cite{Oliveira&09} and discrete time modelling languages~\cite{Woodcock14}.

A more complex model is that of piecewise continuous functions, for which we adopt and refine a model called \emph{timed
  traces} ($\ttrtype$)~\cite{Hayes2006}. A timed trace is a partial function of type $\real_{\ge 0} \pfun \Sigma$, for
continuous state type $\Sigma$, which represents the system's continuous evolution with respect to time.

\begin{figure}[t]
  \begin{center}
    \includegraphics[width=7.5cm]{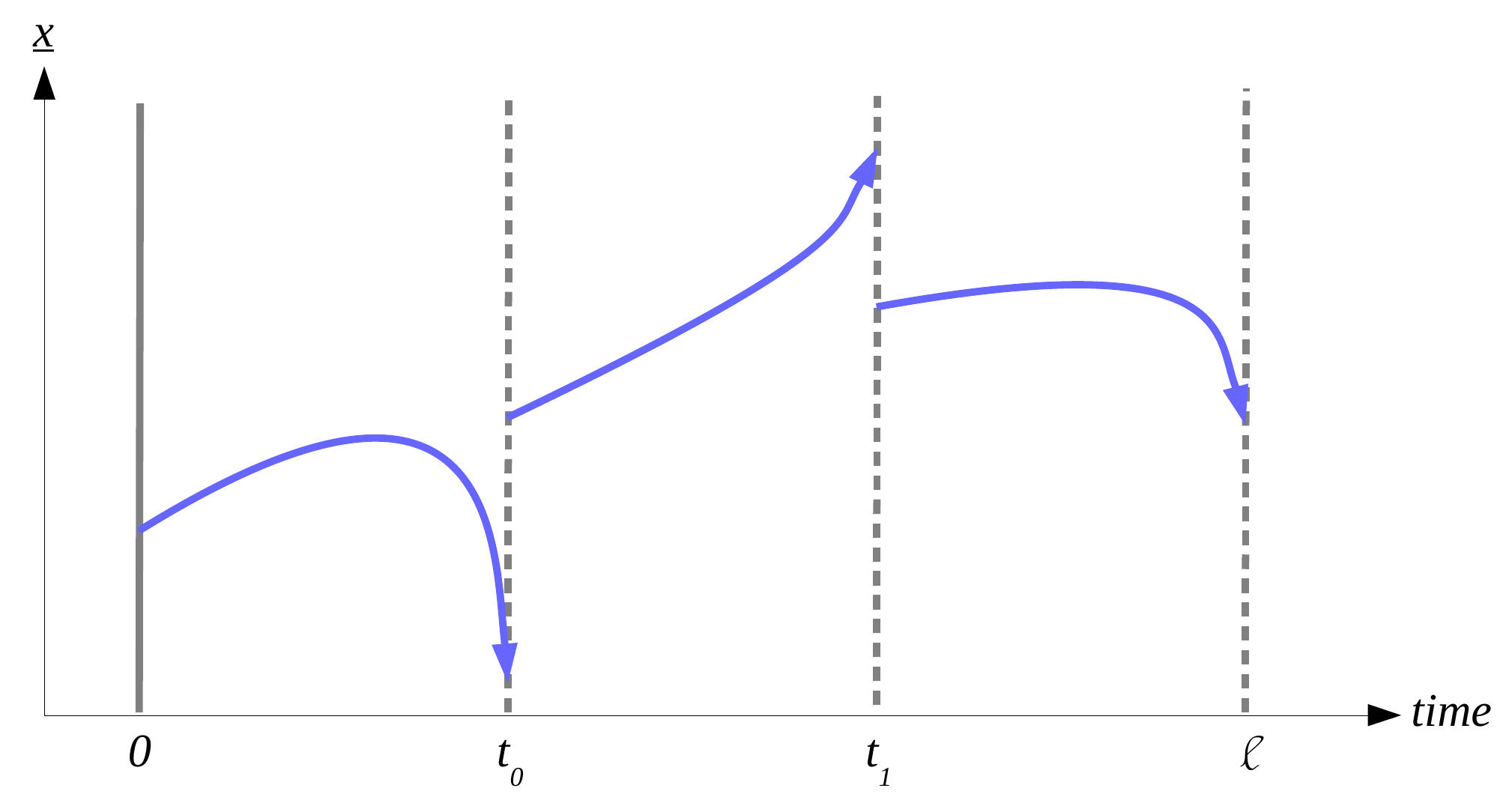}
  \end{center}

  \vspace{-3ex}
  \caption{Piecewise continuous timed traces}
  \vspace{-1ex}
  \label{fig:ttrace}
\end{figure}

In our model we also require that timed traces be piecewise continuous, to allow both continuous and discrete
information. A timed trace is split into a finite sequence of continuous segments, as shown in Figure~\ref{fig:ttrace}.
Each segment accounts for a particular evolution of the state interspersed with discontinuous discrete events. This
necessitates that we can describe limits and continuity, and consequently we require that $\Sigma$ be a topological
space, such as $\mathbb{R}^n$, though it can also contain discrete topological information, like events. Continuous
variables are projections such as $x : \Sigma \to \mathbb{R}$. We give the formal model below.
\begin{definition}[Timed Traces]
$$\begin{array}{ll}
    \ttrtype \defs & \!\! 
    \left\{\begin{array}{ll}
    f : \mathbb{R}_{\ge 0} \pfun \Sigma \\
    |~ \exists t : \real_{\ge 0} @ \dom(f) = [0, t) \\
    \qquad \land t > 0 \implies \exists I : \mathbb{R}_{\textnormal{oseq}} 
    \\  
    ~~ @ \left( \begin{array}{l}
      \ran(I) \subseteq [0,t] \\ 
      \land \{0,t\} \subseteq \ran(I) \\
      \land \left( \begin{array}{l} \forall n < \# I - 1 @ \\ ~ f \conton [I_n, I_{n+1})
        \end{array} \! \right)
      \end{array} \! \right)
  \end{array}
  \right\}
\end{array}$$
$\begin{array}{l}
  \orseq \defs \{x : \textnormal{seq}\,\mathbb{R} ~|~ \forall n < \#{x} - 1 @ x_n < x_{n + 1}\} \\[1ex]
  f \conton [m, n) \defs \forall t \in [m, n) @ \underset{x \to t^{+}}{\lim}\, f(x) = f(t)
\end{array}$
\end{definition}
\noindent A timed trace is a partial function $f$ with domain $[0, t)$, for end point $t \ge 0$. When the trace is
non-empty $(t > 0)$, there exists an ordered sequence of instants $I$ giving the bounds of each segment. $\orseq$ is the
subset of finite real sequences such that for every index $n$ in the sequence less than its length $\#{x}$, $x_n <
x_{n+1}$. $I$ must naturally contain at least $0$ and $t$, and only values between these two extremes. The timed trace
$f$ is required to be continuous on each interval $[I_n, I_{n+1})$. The operator $f \conton A$ denotes that $f$ is
continuous on the range given by $A$. We now introduce the core timed trace operators, which take inspiration from
H\"{o}fner's algebraic trajectories~\cite{Hofner2009}.
\begin{definition}[Timed-trace Operators]
$$\begin{array}{rcl}
  \ttend(f) &~~\defs~~& \mathrm{min}(\mathbb{R}_{\ge 0} \setminus \dom(f)) \\[1ex]
  \tempty &~~\defs~~& \emptyset \\[1ex]
  f \tcat g &~~\defs~~& f \cup (g \rshift \ttend(f))
\end{array}$$
\end{definition}
\noindent Auxiliary function $f \rshift n$ shifts the indices of a partial function $f : \real_{\ge 0} \pfun A$ to the
right by $n : \real_{\ge 0}$, and has definition $\lambda x @ f(x - n)$. The operator $\ttend(f)$ gives the end time of a trace
$f : \ttrtype$ by taking the infimum of the real numbers excluding the domain of $f$. The empty trace $\tempty$ is the
empty function. Finally, $f \tcat g$ shifts the domain of $g$ to start at the end of $f$, and takes the union. We
establish laws governing these trace operators.
\begin{theorem}[Timed-trace Laws]
  \begin{align*}
  (f \rshift m) \rshift n &= f \rshift (m + n) \tag{T1} \label{law:T1} \\
  (f \cup g) \rshift n &=   (f \rshift n) \cup (g \rshift n) \tag{T2} \label{law:T2} \\
  \ttend(\tempty) &= 0 \tag{T3} \label{law:T3} \\
  \ttend(x \tcat y) &= \ttend(x) + \ttend(y) \tag{T4} \label{law:T4}
\end{align*}
\end{theorem}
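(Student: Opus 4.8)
The plan is to treat T1--T3 as direct consequences of unfolding the definitions, and to concentrate the real work on T4, which rests on a characterisation of $\ttend$ in terms of the domain of a timed trace.

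For T1, I would unfold the definition of the shift operator, $f \rshift n = \lambda x @ f(x - n)$, and reason pointwise: for every $x$, $((f \rshift m) \rshift n)(x) = (f \rshift m)(x - n) = f((x - n) - m) = f(x - (m + n)) = (f \rshift (m + n))(x)$, using associativity and commutativity of real addition. For T2, the cleanest route is to regard a partial function as a set of pairs and the shift as the relational image $f \rshift n = \{(a + n, b) \mid (a, b) \in f\}$; distributivity of this image over $\cup$ is then immediate set theory, since a pair lies in $(f \cup g) \rshift n$ exactly when it is the shift of a pair of $f$ or of $g$. For T3, since $\tempty = \emptyset$ we have $\dom(\tempty) = \emptyset$, so $\mathbb{R}_{\ge 0} \setminus \dom(\tempty) = \mathbb{R}_{\ge 0} = [0, \infty)$, whose minimum is $0$.

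The substance of the theorem is T4. First I would establish the key lemma that, for every timed trace $f : \ttrtype$, the end time recovers the upper bound of the domain: the definition of $\ttrtype$ guarantees $\dom(f) = [0, t)$ for some $t \ge 0$, so the complement $\mathbb{R}_{\ge 0} \setminus \dom(f)$ is $[t, \infty)$, whose minimum is exactly $t$; hence $\ttend(f) = t$, or equivalently $\dom(f) = [0, \ttend(f))$. Writing $t_x = \ttend(x)$ and $t_y = \ttend(y)$, and unfolding concatenation as $x \tcat y = x \cup (y \rshift t_x)$, the shifted trace $y \rshift t_x$ has domain $\{z \mid z - t_x \in [0, t_y)\} = [t_x, t_x + t_y)$, so $\dom(x \tcat y) = [0, t_x) \cup [t_x, t_x + t_y) = [0, t_x + t_y)$. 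Reasoning about $\ttend$ exactly as in the lemma — the complement of $[0, t_x + t_y)$ is $[t_x + t_y, \infty)$ with minimum $t_x + t_y$ — gives $\ttend(x \tcat y) = t_x + t_y = \ttend(x) + \ttend(y)$.

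The main obstacle is the gluing step in T4: the two half-open intervals $[0, t_x)$ and $[t_x, t_x + t_y)$ must abut exactly, which is precisely why the domain must be recovered as the \emph{sharp} endpoint $\ttend(x)$ rather than merely bounded by it. This is where the half-open convention in the definition of $\ttrtype$ does the essential work — the point $t_x$ is excluded from $\dom(x)$ but is the least element of $\dom(y \rshift t_x)$, so the union is a clean interval with neither gap nor overlap. Everything else is routine real-interval arithmetic.
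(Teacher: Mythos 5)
Your proof is correct: the paper states these laws without an explicit proof (deferring to the Isabelle mechanisation), and your argument — pointwise unfolding of the shift for T1--T3, plus the key observation that $\ttend$ recovers the sharp endpoint $t$ of the contiguous domain $[0,t)$ so that the two half-open intervals in T4 abut exactly — is the standard definitional route the mechanisation takes. No gaps.
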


\noindent \ref{law:T1} shows that shifting a function twice equates to a single shift on their summation. \ref{law:T2}
shows that shift distributes through function union. \ref{law:T3} shows that the length of the empty trace is $0$, and
\ref{law:T4} shows that the length of a trace is the sum of its parts. $\ttrtype$ is closed under trace concatenation.

\begin{theorem}[Trace Concatenation Closure] If there exists $m, n : \real_{\ge 0}$, such that $\dom(tt_1) = [0,m)$, and
  $\dom(tt_2) = [0,n)$, then $tt_1,tt_2 \in \ttrtype$ if, and only if, $tt_1 \tcat tt_2 \in \ttrtype$.
\end{theorem}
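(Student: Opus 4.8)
The plan is to prove the biconditional directly in the concrete model, exploiting the fact that membership of $\ttrtype$ decomposes into a domain condition plus a segmentwise right-continuity condition, and that $tt_1 \tcat tt_2 = tt_1 \cup (tt_2 \rshift m)$ has domain $[0, m) \cup [m, m+n) = [0, m+n)$ with $\ttend(tt_1 \tcat tt_2) = m + n$ by \ref{law:T4}, so the domain condition holds automatically for the composite. Throughout I would separate the degenerate cases $m = 0$ and $n = 0$: if $m = 0$ then $tt_1 = \tempty$ and $tt_1 \tcat tt_2 = tt_2$, reducing the claim to a tautology, and symmetrically for $n = 0$; hence I may assume $m, n > 0$ and in particular that both traces are non-empty, so genuine partitioning sequences exist.

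For the \emph{forward} direction, assume $tt_1, tt_2 \in \ttrtype$ with witnessing sequences $I^1, I^2 \in \orseq$. I construct a partition $I$ for the composite by appending to $I^1$ the tail of $I^2$ shifted right by $m$, deleting the duplicated junction value $m$ (the last entry of $I^1$ and the image of the first entry $0$ of $I^2$). Strict monotonicity of $I$ follows from that of $I^1$ and $I^2$ together with the shift, and $\{0, m+n\} \subseteq \ran(I) \subseteq [0, m+n]$ is then immediate. The only real content is the segmentwise continuity: each segment of $I$ is either a segment of $I^1$, on which $tt_1 \tcat tt_2$ agrees with $tt_1$ since its points lie in $[0,m)$, or the rightward shift of a segment of $I^2$, on which it agrees with $tt_2 \rshift m$. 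For the latter I would use a small lemma stating that, because continuity is phrased via the right limit $\lim_{x \to t^+}$, it is preserved by shifting: $f \conton [a,b)$ iff $(f \rshift m) \conton [a+m, b+m)$. Crucially, no continuity condition spanning the junction $m$ is required: as segments are half-open intervals $[I_n, I_{n+1})$ and continuity is only right-continuity, the point $m$ is simply the left endpoint of the first shifted segment and is therefore governed entirely by $tt_2$.

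For the \emph{reverse} direction, assume $tt_1 \tcat tt_2 \in \ttrtype$ with partition $I$ over $[0, m+n]$. I first arrange that $m \in \ran(I)$: if $m$ already occurs, keep $I$; otherwise $m$ lies strictly inside some segment $[I_k, I_{k+1})$, and I insert it, using that continuity on $[I_k, I_{k+1})$ restricts to continuity on both $[I_k, m)$ and $[m, I_{k+1})$. Splitting this refined sequence at the value $m$ yields a prefix witnessing that the restriction of $tt_1 \tcat tt_2$ to $[0,m)$, which is exactly $tt_1$, lies in $\ttrtype$; and a suffix over $[m, m+n]$ whose induced function is $tt_2 \rshift m$. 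Applying the shift-invariance lemma with $f = tt_2$ then transfers continuity of $tt_2 \rshift m$ on each suffix segment back to continuity of $tt_2$ on the segment shifted by $-m$, so $tt_2 \in \ttrtype$, using that restriction of a piecewise-continuous function to a subinterval preserves piecewise continuity.

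I expect the main obstacle to be the bookkeeping around the junction point $m$ in the reverse direction: the case distinction on whether $m$ is already a partition point, performing the insertion while preserving strict monotonicity of the index sequence, and splitting the finite sequence $I$ cleanly into two sub-partitions with correctly re-indexed bounds. The analytic facts themselves, namely restriction to subintervals and invariance of right-continuity under shifting, are routine once the shift-invariance lemma is established; it is the sequence manipulation, with its index arithmetic and edge cases such as empty sub-partitions and the shared endpoint $m$, that demands the most care and is the part most in need of the mechanised check.
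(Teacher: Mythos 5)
Your argument is correct and is essentially the intended one: the paper gives no textual proof of this theorem (deferring to the Isabelle/UTP mechanisation), and your decomposition --- domain condition automatic from $[0,m)\cup[m,m+n)=[0,m+n)$, concatenation/splitting of the partition sequences at the junction $m$, a shift-invariance lemma for $\conton$, and the observation that half-open segments with right-continuity require no condition spanning $m$ --- is exactly the content that mechanisation has to discharge. Your assessment of where the difficulty lies (the index bookkeeping when inserting and splitting at $m$, plus the degenerate cases $m=0$ or $n=0$) is also accurate.
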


\noindent This theorem tells us that decomposition of a timed trace always yields timed traces, provided both $f$ and
$g$ have a contiguous domain. Finally, trace concatenation satisfies our trace algebra.

\begin{theorem} $(\ttrtype, \tcat, \tempty)$ forms a trace algebra
\end{theorem}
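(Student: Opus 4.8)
The plan is to verify the five trace-algebra axioms \ref{law:tassoc}--\ref{law:tnai} directly, reusing the timed-trace laws \ref{law:T1}--\ref{law:T4} together with the structural fact, guaranteed by the definition of $\ttrtype$, that every timed trace $f$ has domain exactly $[0, \ttend(f))$. This domain fact is what makes the two operands of a concatenation $f \tcat g = f \cup (g \rshift \ttend(f))$ occupy the disjoint, adjacent intervals $[0, \ttend(f))$ and $[\ttend(f), \ttend(f) + \ttend(g))$, and it will underpin the cancellation arguments. Well-definedness of $\tcat$ on $\ttrtype$ is already supplied by the Trace Concatenation Closure theorem, so it remains only to check the equational and cancellative properties.

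The unit law \ref{law:tunit} is the easiest. The term $\tempty \tcat x$ unfolds to $\emptyset \cup (x \rshift \ttend(\tempty))$, which by \ref{law:T3} is $\emptyset \cup (x \rshift 0) = x$, while $x \tcat \tempty = x \cup (\emptyset \rshift \ttend(x)) = x \cup \emptyset = x$. For associativity \ref{law:tassoc}, I would expand both $f \tcat (g \tcat h)$ and $(f \tcat g) \tcat h$ using the definition of $\tcat$ together with \ref{law:T2} to distribute $\rshift$ over $\cup$, \ref{law:T1} to collapse the resulting nested shifts, and \ref{law:T4} to rewrite $\ttend(f \tcat g)$ as $\ttend(f) + \ttend(g)$; the two sides then coincide because set union and real addition are both associative and commutative.

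The cancellation axioms are where the domain structure does the real work. For left cancellation \ref{law:tcancl1}, the equation $f \tcat g = f \tcat h$ expands to $f \cup (g \rshift \ttend(f)) = f \cup (h \rshift \ttend(f))$ with a common shift amount $\ttend(f)$; restricting to the interval $[\ttend(f), \infty)$, which is disjoint from $\dom(f) = [0, \ttend(f))$, yields $g \rshift \ttend(f) = h \rshift \ttend(f)$, and injectivity of the shift operation (immediate from its definition $\lambda x @ f(x - n)$) then gives $g = h$. Right cancellation \ref{law:tcancl2} is the main obstacle, since the two shift amounts $\ttend(f)$ and $\ttend(g)$ differ a priori. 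The key first step is to apply $\ttend$ to both sides of $f \tcat h = g \tcat h$: by \ref{law:T4} this gives $\ttend(f) + \ttend(h) = \ttend(g) + \ttend(h)$, whence $\ttend(f) = \ttend(g)$ by cancellation of real addition. Once the shift amounts are known to agree, restricting the equation to $[0, \ttend(f))$, where the shifted copies of $h$ do not contribute, gives $f = g$.

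Finally, for \ref{law:tnai}, if $f \tcat g = \tempty$ then $f \cup (g \rshift \ttend(f)) = \emptyset$, and since a union of functions is empty only when both operands are, we obtain $f = \emptyset = \tempty$ at once, with the dual $g = \tempty$ following identically. I expect \ref{law:tcancl2} to be the crux: unlike the other cases it cannot be dispatched purely equationally, but instead requires first pinning down the end times via \ref{law:T4} so that the shift amounts align before the disjoint-domain restriction can be brought to bear.
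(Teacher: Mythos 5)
Your proposal is correct and follows essentially the same route as the paper: the associativity derivation via \ref{law:T1}, \ref{law:T2}, and \ref{law:T4} is exactly the paper's displayed calculation, and the remaining axioms (which the paper dismisses as ``simpler'') are dispatched by the natural disjoint-domain and end-time arguments you give. Your observation that right cancellation needs \ref{law:T4} to align the shift amounts before restricting domains is a sound and worthwhile elaboration of what the paper leaves implicit.
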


\begin{proof} For illustration, we show the derivation for associativity. The other proofs are simpler.
  \begin{align*}
     x & \tcat (y \tcat z) \\
  &= x \cup ((y \cup (z \rshift \ttend(y)) \rshift \ttend(x)) \\
  &= x \cup ((y \rshift \ttend(x)) \cup (z \rshift \ttend(y) \rshift \ttend(x))) \\
  &= (x \cup (y \rshift \ttend(x))) \cup (z \rshift (\ttend(x) + \ttend(y))) \\
  &= (x \tcat y) \cup (z \rshift (\ttend(x) + \ttend(y))) \\
  &= (x \tcat y) \cup (z \rshift (\ttend(x \tcat y))) \\
  &= (x \tcat y) \tcat z
  \end{align*} 

  \vspace{-4.5ex}

\end{proof}

\noindent This model provides the basis for hybrid computation. We introduce the theory in the next section.

\section{Generalised Reactive Processes}
\label{sec:grp}
Here, we use our trace algebra to provide a generalised theory of reactive processes. We prove the key laws of reactive
processes, thus demonstrating the conservative nature of our theory. Many of the properties here have been previously
proved~\cite{Cavalcanti&06}, but we restate and prove many of them due to our weakening of the trace model and some
small differences. Another novelty is that all these theorems have been mechanised in our Isabelle/UTP
repository. Following~\cite{Hoare&98,Cavalcanti&06} we define the theory in terms of two pairs of observational
variables:

\begin{itemize}
\item $wait, wait' : \Bool$ -- describe when the previous or current process, respectively, is in an intermediate state;
\item $tr, tr' : \tset$ -- the trace that occurred prior to and after execution of the current process in terms of a
  trace algebra $(\tset, \tcat, \tempty)$.
\end{itemize}

\noindent Our theory does not contain refusal variables $ref, ref'$, as these are not always necessary to describe
reactive processes~\cite{Woodcock14}. We describe three healthiness conditions namely \healthy{R1}, $\healthy{R2}_c$,
and \healthy{R3}. $\healthy{R1}$ and $\healthy{R3}$ are already presented in \cite{Hoare&98}; for their $\healthy{R2}$
we have a different formulation, which we call $\healthy{R2}_c$.
\begin{definition}[Reactive Healthiness Conditions]
  \begin{align*}
    \healthy{R1}(P) & ~\defs~ P \land tr \le tr' \\
    \healthy{R2}_c(P) & ~\defs~ \conditional{P[\tempty, tr' \tminus tr/tr, tr']}{tr \le tr'}{P} \\
    \healthy{R3}(P) & ~\defs~ \conditional{\II}{wait}{P} \\
    \healthy{R} & ~\defs~ \healthy{R3} \circ \healthy{R2}_c \circ \healthy{R1}
  \end{align*}
\end{definition}
\noindent $\healthy{R1}$ states that $tr$ is monotonically increasing; processes are not permitted to undo past
events. $\healthy{R2}_c$ states that a process must be history independent: the only part of the trace it may constrain
is $tr' \tminus tr$, that is, the portion since the previous observation $tr$. Specifically, if the history is deleted,
by substituting $\tempty$ for $tr$ and $tr' - tr$ for $tr'$, then the behaviour of the process is unchanged. Our
formulation of $\healthy{R2}_c$ deletes the history only when $tr \le tr'$, which ensures that $\healthy{R2}_c$ does not
depend on $\healthy{R1}$, and thus commutes with it. Finally, $\healthy{R3}$ states that if a prior process is
intermediate ($wait'$) then the current process must identify all variables.

We compose the three to yield $\healthy{R}$, the overall healthiness condition of reactive processes. An example
$\healthy{R}$ healthy predicate is $$\healthy{R3}(tr' = tr \tcat \langle a \rangle \land v' = v)$$ which extends the
trace with a single event $a$ and leaves program variable $v$ unchanged. We show that $\healthy{R}$ is idempotent and
monotonic.
\begin{theorem}[$\healthy{R}$ idempotence and monotonicity] \label{thm:reah-im}
$$\healthy{R} = \healthy{R} \circ \healthy{R} \text{~~and~~} P \refinedby Q \implies \healthy{R}(P) \refinedby \healthy{R}(Q)$$
\end{theorem}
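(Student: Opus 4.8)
The plan is to exploit the compositional principle recalled in Section~\ref{sec:bg}: since $\healthy{R} = \healthy{R3} \circ \healthy{R2}_c \circ \healthy{R1}$, it suffices to establish that each of the three constituents is idempotent and monotonic, and that they commute pairwise. Idempotence and monotonicity of $\healthy{R}$ then follow mechanically, so the whole argument reduces to nine facts about the components.

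Monotonicity is the routine part. $\healthy{R1}$ conjoins the fixed predicate $tr \le tr'$, so it is monotonic because conjunction is; $\healthy{R3}$ and $\healthy{R2}_c$ are built from the conditional, monotonic in each branch, and from substitution, which preserves refinement $\refinedby$. Idempotence of $\healthy{R1}$ is immediate from idempotence of $\land$, and idempotence of $\healthy{R3}$ follows from the conditional-collapse law $\conditional{\II}{wait}{(\conditional{\II}{wait}{P})} = \conditional{\II}{wait}{P}$, since the nested conditionals share the guard $wait$. The interesting case is idempotence of $\healthy{R2}_c$. Writing $\sigma$ for the substitution $[\tempty, tr' \tminus tr / tr, tr']$, I would argue by cases on the guard $tr \le tr'$. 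When it fails, $\healthy{R2}_c$ is the identity. When it holds, re-applying $\healthy{R2}_c$ substitutes $\sigma$ into $\healthy{R2}_c(P)$; since $\sigma$ turns the inner guard into $\tempty \le tr' \tminus tr$, true by~\ref{law:tp0le}, this reduces to $P\sigma\sigma$. The key observation is that $\sigma$ is idempotent on already-substituted terms, because each residual $tr' \tminus tr$ is mapped to $(tr' \tminus tr) \tminus \tempty$, which collapses to $tr' \tminus tr$ by~\ref{law:tsright0}. Hence $P\sigma\sigma = P\sigma$.

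For the commutation obligations I would again reduce each to a case split on the relevant guards. $\healthy{R1}$ commutes with $\healthy{R3}$ because $\II$ forces $tr' = tr$, whence $tr \le tr'$ by reflexivity of the prefix order~(\ref{law:TP1}); distributing the conjunction through the conditional then lines the two sides up. $\healthy{R1}$ commutes with $\healthy{R2}_c$ by design: pushing $\sigma$ through $tr \le tr'$ yields $\tempty \le tr' \tminus tr$, discharged by~\ref{law:tp0le}, so both orders collapse to $\conditional{P\sigma}{tr \le tr'}{\mathit{false}}$.

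The subtle obligation is that $\healthy{R2}_c$ commutes with $\healthy{R3}$, and I expect this to be the main obstacle. Expanding both composites into the four cases on $(wait, tr \le tr')$, the only cell that fails to match syntactically is $wait \land tr \le tr'$, where one side carries $\II$ and the other $\II\sigma$; the latter replaces the conjunct $tr' = tr$ by $tr' \tminus tr = \tempty$. I would close this using~\ref{law:TS7}: under the guard $tr \le tr'$, the conjunct $tr' \tminus tr = \tempty$ is equivalent to $tr' = tr$, so $\II\sigma$ and $\II$ agree on the trace, while $\sigma$ leaves $wait$ and the program data untouched. Assembling the nine facts via the compositional principle then yields idempotence and monotonicity of $\healthy{R}$.
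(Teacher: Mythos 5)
Your proposal is correct and follows essentially the route the paper intends: the paper states no explicit proof, instead deferring to the compositional principle of Section~\ref{sec:bg} (componentwise idempotence, monotonicity, and pairwise commutativity) and the Isabelle mechanisation, and it explicitly notes that $\healthy{R2}_c$ is formulated precisely so that it commutes with $\healthy{R1}$. Your discharge of the component obligations, including the $\II\sigma$ versus $\II$ case via \ref{law:TS7} and the idempotence of the substitution via \ref{law:tsright0}, is sound.
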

\noindent A corollary of Theorem~\ref{thm:reah-im} is that reactive processes form a complete lattice.

\begin{theorem}
  Reactive processes form a complete lattice ordered by $\refinedby$, with infimum $\thinf{R} \, A$ and supremum
  $\thsup{R} \, A$, for $A \subseteq \carrier{\healthy{R}}$.
\end{theorem}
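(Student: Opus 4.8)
The plan is to obtain this as a direct instance of the standard Knaster--Tarski-style result, already invoked in Section~\ref{sec:bg}, that the set of fixed-points of a monotone and idempotent function on a complete lattice is itself a complete lattice. First I would recall that the underlying universe of predicates over the reactive alphabet $(wait, wait', tr, tr', \ldots)$, ordered by the refinement order $\refinedby$, is a complete lattice: it is the usual complete Boolean lattice of relations, in which arbitrary infima $\bigsqcap$ and suprema $\bigsqcup$ exist. By Theorem~\ref{thm:reah-im}, $\healthy{R}$ is both monotone and idempotent on this lattice, so its image coincides with its fixed-point set $\carrier{\healthy{R}}$ --- idempotence gives $P \in \mathrm{Im}(\healthy{R})$ iff $\healthy{R}(P) = P$ --- which is exactly the carrier we wish to endow with lattice structure.

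The central step is then the generic lemma: for a monotone idempotent $f$ on a complete lattice and any $A \subseteq \carrier{f}$, the element $f(\bigsqcup A)$ is the supremum and $f(\bigsqcap A)$ is the infimum of $A$ within $\carrier{f}$. I would justify both directions by the standard two-line argument. Each candidate is a fixed-point by idempotence; $f(\bigsqcup A)$ is an upper bound because $a = f(a) \refinedby f(\bigsqcup A)$ for every $a \in A$ by monotonicity, and it is the least such because any fixed-point upper bound $b$ satisfies $\bigsqcup A \refinedby b$, hence $f(\bigsqcup A) \refinedby f(b) = b$; the argument for $f(\bigsqcap A)$ as the greatest lower bound is dual. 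Instantiating $f \defs \healthy{R}$ then identifies the theory operators as $\thsup{R}\,A \defs \healthy{R}(\bigsqcup A)$ and $\thinf{R}\,A \defs \healthy{R}(\bigsqcap A)$, with the extremal points $\healthy{R}(\top)$ and $\healthy{R}(\bot)$ arising from the case $A = \emptyset$.

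Most of this is routine once Theorem~\ref{thm:reah-im} is in hand, so the statement is genuinely a corollary rather than an independent result. The only real obligation is supplying the ambient complete lattice of predicates and checking that the bound formulas above land in $\carrier{\healthy{R}}$ and satisfy the universal properties; in the mechanisation this reduces to instantiating the existing Isabelle/UTP infrastructure for complete lattices of healthy predicates with the witness that $\healthy{R}$ is monotone and idempotent. The point most worth stating carefully is the correspondence between image and fixed-point set, since it is what lets us read off $\thinf{R}$ and $\thsup{R}$ concretely rather than merely asserting abstract completeness.
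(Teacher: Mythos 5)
Your proposal is correct and matches the paper's approach: the paper presents this result precisely as a corollary of Theorem~\ref{thm:reah-im}, via the standard Knaster--Tarski-style fact (already flagged in Section~\ref{sec:bg}) that the fixed-point set of a monotone idempotent function on the complete lattice of alphabetised predicates is itself a complete lattice, with $\thinf{R}\,A = \healthy{R}(\bigsqcap A)$ (as the paper later uses explicitly in the quantale proof) and $\thsup{R}\,A = \healthy{R}(\bigsqcup A)$. Your two-line verification of the universal properties is exactly the content the paper leaves implicit.
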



\noindent This, in particular, provides us with specification and reasoning facilities about recursive reactive processes
using the fixed-point operators. 

Having stated the lattice theoretic properties of reactive processes, we move onto the relational
operators. Intuitively, $\healthy{R1}$ and $\healthy{R2}_c$ together ensure that the reactive behaviour of a process
contributes an extension $t$ to the trace.

\begin{theorem}[$\healthy{R1}$-$\healthy{R2}_c$ trace contribution] \label{thm:trcontr}
$$\healthy{R1}(\healthy{R2}_c(P)) = (\exists t @ P[\tempty,t/tr,tr'] \land tr' = tr \tcat t)$$
\end{theorem}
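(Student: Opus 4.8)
The plan is to prove this equality by transforming the left-hand side $\healthy{R1}(\healthy{R2}_c(P))$ into the right-hand side existential through a sequence of equivalence-preserving rewrites on the predicate. The key observation is that $\healthy{R1}$ conjoins the constraint $tr \le tr'$, and precisely on that region $\healthy{R2}_c$ takes the form $P[\tempty, tr' \tminus tr/tr, tr']$. So the first step is to expand both definitions and exploit the fact that, under the assumption $tr \le tr'$ supplied by $\healthy{R1}$, the conditional in $\healthy{R2}_c$ collapses to its ``then'' branch, giving $P[\tempty, tr' \tminus tr/tr, tr'] \land tr \le tr'$.

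The heart of the argument is then a witness/substitution manipulation relating the substituted predicate to the existentially quantified form on the right. I would introduce the witness $t \defs tr' \tminus tr$ and show the two predicates coincide. In the forward direction, given $tr \le tr'$, I instantiate the existential with $t = tr' \tminus tr$; the required conjunct $tr' = tr \tcat t$ then follows immediately from law~\ref{law:TS8}, namely $x \le y \implies x \tcat (y \tminus x) = y$, while $P[\tempty, tr' \tminus tr/tr, tr']$ is exactly $P[\tempty, t/tr, tr']$ by the definition of $t$. In the converse direction, given some $t$ with $P[\tempty, t/tr, tr']$ and $tr' = tr \tcat t$, I first recover $tr \le tr'$ from law~\ref{law:TP3} ($x \le x \tcat y$), and then show $tr' \tminus tr = t$ using law~\ref{law:tscat} ($(x \tcat y)\tminus x = y$) applied to $tr' = tr \tcat t$; substituting this back identifies the two predicates.

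The main obstacle I anticipate is handling the substitutions cleanly, since $\healthy{R2}_c$ substitutes simultaneously into both $tr$ and $tr'$ (sending $tr \mapsto \tempty$ and $tr' \mapsto tr' \tminus tr$), and one must verify that after introducing the witness $t$ these substitutions line up syntactically with $P[\tempty, t/tr, tr']$ without variable-capture issues or unintended interaction with free occurrences of $tr$ and $tr'$ elsewhere. The subtraction laws do the real algebraic work, but the bookkeeping of which substitution applies to which occurrence is where care is needed; this is also the point at which the mechanisation is genuinely reassuring, since Isabelle/UTP tracks the substitution lemmas and the trace-algebra laws automatically and discharges the equality without the hand-proof's risk of a mismatched substitution.
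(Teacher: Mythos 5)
Your proof is correct: the paper itself prints no proof for Theorem~\ref{thm:trcontr} (deferring to the Isabelle/UTP mechanisation), and your argument --- collapsing the $\healthy{R2}_c$ conditional to its then-branch under the $tr \le tr'$ conjunct supplied by $\healthy{R1}$, then establishing the two directions with the witness $t = tr' \tminus tr$ via \ref{law:TS8}, \ref{law:TP3}, and \ref{law:tscat} --- is exactly the expected derivation. The substitution bookkeeping you flag is the only delicate point, and you handle it correctly.
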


\noindent This shows that for any $\healthy{R1}$-$\healthy{R2}_c$ process there exists a trace extension $t$
recording its behaviour, and that $tr'$ is the prior history appended with this extension. Aside from illustrating
$\healthy{R1}$ and $\healthy{R2}_c$, this allows us to restate a process containing $tr$ and $tr'$ to one with only the
extension logical variable $t$, which provides a more natural entry point for reasoning about the trace
contribution of a process. In particular, we can prove a related law about sequential composition of reactive processes.

\begin{theorem}[$\healthy{R1}$-$\healthy{R2}_c$ sequential] If $P$ and $Q$ are \healthy{R1}-$\healthy{R2}_c$ healthy, then
  \begin{align*}
    P \relsemi Q = \exists t_1, t_2 @ ((& P[\tempty,t_1/tr,tr'] \relsemi \\ 
                                                      & Q[\tempty,t_2/tr,tr']) \land \\
                                                      & tr' = tr \tcat t_1 \tcat t_2)
  \end{align*}
\end{theorem}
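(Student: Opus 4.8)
The plan is to normalise $P$ and $Q$ via the trace-contribution result (Theorem~\ref{thm:trcontr}), expand the sequential composition definitionally, and then collapse the intermediate trace using the one-point rule. Since every manipulation is an equality --- rewriting by a proven law, one-point elimination, and associativity --- the whole argument should be a single equational chain, so I would not need to establish two inclusions separately.

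First I would exploit the hypothesis that $P$ and $Q$ are $\healthy{R1}$-$\healthy{R2}_c$ healthy. By Theorem~\ref{thm:trcontr} this gives $P = (\exists t_1 @ P[\tempty,t_1/tr,tr'] \land tr' = tr \tcat t_1)$, and symmetrically for $Q$ with a fresh $t_2$. Substituting both into $P \relsemi Q \defs \exists v_0 @ P[v_0/v'] \land Q[v_0/v]$ and separating the intermediate trace $tr_0$ from the remaining intermediate observations, I would float the fresh quantifiers $t_1$ and $t_2$ to the front. The decisive observation is that $P[\tempty,t_1/tr,tr']$ no longer mentions $tr'$ and $Q[\tempty,t_2/tr,tr']$ no longer mentions $tr$; hence the only constraints on $tr_0$ are the two side equations, which become $tr_0 = tr \tcat t_1$ and $tr' = tr_0 \tcat t_2$.

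Next I would discharge $\exists tr_0$ by the one-point rule using $tr_0 = tr \tcat t_1$, substitute into the second equation, and rewrite with associativity~(\ref{law:tassoc}) to obtain $tr' = tr \tcat t_1 \tcat t_2$. What remains under $\exists t_1, t_2$ is the conjunction of the two reduced predicates, existentially quantified over the non-trace intermediate variables, alongside this single trace equation. Folding that residual conjunction back into $P[\tempty,t_1/tr,tr'] \relsemi Q[\tempty,t_2/tr,tr']$ then completes the chain and yields the stated form.

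I expect the main obstacle to be the bookkeeping around sequential composition quantifying over \emph{all} intermediate variables, not merely $tr_0$: I must show that once $tr_0$ is eliminated, the residual existential over the remaining intermediate state is exactly the $\relsemi$ of the two reduced predicates. This works precisely because neither reduced predicate constrains the intermediate trace --- $P[\tempty,t_1/tr,tr']$ is independent of the final trace and $Q[\tempty,t_2/tr,tr']$ of the initial one --- so reintroducing the (vacuous) trace component of $\relsemi$ is sound while the state component is untouched throughout. Care is also needed in choosing $t_1$, $t_2$, and $tr_0$ fresh, so that the quantifier movements and one-point elimination remain valid.
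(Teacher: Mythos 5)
Your proposal is correct and matches the paper's proof, which is stated exactly as ``By Theorem~\ref{thm:trcontr} and relational calculus'': you normalise $P$ and $Q$ via the trace-contribution theorem, expand $\relsemi$, eliminate the intermediate trace by the one-point rule, and reassemble, which is precisely the relational-calculus work the paper leaves implicit. You also correctly identify the one genuinely delicate step --- that the residual existential over the non-trace intermediate state refolds into $P[\tempty,t_1/tr,tr'] \relsemi Q[\tempty,t_2/tr,tr']$ because neither reduced predicate constrains the intermediate trace.
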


\begin{proof}
  By Theorem~\ref{thm:trcontr} and relational calculus.
\end{proof}

\noindent This theorem shows that two sequentially composed processes have their own unique contribution to the trace
without sharing or interference. When applied in the context of a timed trace, for example, it allows us to subdivide
the trajectory into segments, which we can reason about separately. This theorem allows us to demonstrate closure of
$\healthy{R1}$-$\healthy{R2}_c$ predicates under sequential composition.

\begin{theorem}[$\healthy{R1}$-$\healthy{R2}_c$ sequential closure] \label{thm:r1r2seq} If $P$ and $Q$ are both
  \healthy{R1} and $\healthy{R2}_c$ healthy then $$\healthy{R1}(\healthy{R2}_c(P \relsemi Q)) = P \relsemi Q$$
\end{theorem}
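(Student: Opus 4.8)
The plan is to reduce the goal to the two normal-form results already in hand, namely Theorem~\ref{thm:trcontr} and the preceding $\healthy{R1}$-$\healthy{R2}_c$ sequential theorem, and then to close the gap with a one-point elimination and two trace-algebra laws. First I would apply Theorem~\ref{thm:trcontr} with $P \relsemi Q$ substituted for $P$, which rewrites the left-hand side directly as
$$\healthy{R1}(\healthy{R2}_c(P \relsemi Q)) ~=~ \exists t @ (P \relsemi Q)[\tempty, t/tr, tr'] \land tr' = tr \tcat t.$$
Crucially, Theorem~\ref{thm:trcontr} holds for arbitrary predicates, so this step needs no healthiness assumption on $P \relsemi Q$; the hypotheses on $P$ and $Q$ are reserved for the next step, which brings $P \relsemi Q$ itself into normal form.

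Second, since $P$ and $Q$ are $\healthy{R1}$-$\healthy{R2}_c$ healthy, I would rewrite $P \relsemi Q$ using the sequential theorem as $\exists t_1, t_2 @ R \land tr' = tr \tcat t_1 \tcat t_2$, where $R \defs P[\tempty,t_1/tr,tr'] \relsemi Q[\tempty,t_2/tr,tr']$. The key observation is that $R$ has no free occurrence of $tr$ or $tr'$: both endpoints have been substituted away inside $P$ and $Q$, and the intermediate trace variable introduced by $\relsemi$ is freshly quantified. Hence the outer substitution $[\tempty, t/tr, tr']$ leaves $R$ untouched and acts only on the trailing equality, sending $tr' = tr \tcat t_1 \tcat t_2$ to $t = \tempty \tcat t_1 \tcat t_2$, which collapses to $t = t_1 \tcat t_2$ by the left-unit law~\ref{law:tunit}.

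Third, I would discharge the quantifier on $t$ by the one-point rule against $t = t_1 \tcat t_2$ and then reassociate using~\ref{law:tassoc}, turning $tr' = tr \tcat (t_1 \tcat t_2)$ into $tr' = tr \tcat t_1 \tcat t_2$. What remains is $\exists t_1, t_2 @ R \land tr' = tr \tcat t_1 \tcat t_2$, which is precisely the normal form of $P \relsemi Q$ given by the sequential theorem, so the chain of equalities closes.

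I expect the main obstacle to be the careful justification that $R$ is free of $tr$ and $tr'$, since this is exactly what licenses pushing the outer substitution through onto the single equality rather than into the body of the composition. This rests on a precise account of how $\relsemi$ introduces and existentially binds the shared intermediate trace variable, the sort of variable-scoping side condition that the Isabelle/UTP mechanisation discharges rigorously. The remaining manipulations, the one-point rule and the substitutions, are routine relational calculus backed only by~\ref{law:tunit} and~\ref{law:tassoc}.
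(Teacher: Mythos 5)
Your proposal is correct and follows exactly the route the paper intends: it derives closure from Theorem~\ref{thm:trcontr} applied to $P \relsemi Q$ together with the $\healthy{R1}$-$\healthy{R2}_c$ sequential normal form, closing with a one-point elimination and the laws \ref{law:tunit} and \ref{law:tassoc}. The key scoping observation---that $P[\tempty,t_1/tr,tr'] \relsemi Q[\tempty,t_2/tr,tr']$ contains no free $tr$ or $tr'$, so the outer substitution acts only on the trailing trace equality---is precisely the point the mechanisation must discharge, and you have identified it correctly.
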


\noindent Closure of $\healthy{R3}$ has previously been shown~\cite{Cavalcanti&06} and we have mechanised this proof.
This allows us to prove the following theorem.

\begin{theorem}[$\healthy{R}$ sequential closure] If $P$ and $Q$ are both \healthy{R} healthy then $P \relsemi Q$ is
  \healthy{R} healthy.
\end{theorem}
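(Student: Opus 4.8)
The plan is to exploit the compositional structure $\healthy{R} = \healthy{R3} \circ \healthy{R2}_c \circ \healthy{R1}$ and reduce the claim to the two closure results already in hand: the $\healthy{R1}$-$\healthy{R2}_c$ sequential closure of Theorem~\ref{thm:r1r2seq}, and the previously established $\healthy{R3}$ closure under sequential composition. The key structural fact I would record first is that, because $\healthy{R1}$, $\healthy{R2}_c$, and $\healthy{R3}$ are each idempotent and pairwise commute (exactly the properties underlying Theorem~\ref{thm:reah-im}), the fixed points of their composition coincide with the intersection of their individual fixed points, that is $\carrier{\healthy{R}} = \carrier{\healthy{R1}} \cap \carrier{\healthy{R2}_c} \cap \carrier{\healthy{R3}}$. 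Hence from $\healthy{R}(P) = P$ and $\healthy{R}(Q) = Q$ I may assume that $P$ and $Q$ are separately $\healthy{R1}$-, $\healthy{R2}_c$-, and $\healthy{R3}$-healthy.

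Second, I would unfold the goal $\healthy{R}(P \relsemi Q) = P \relsemi Q$ as $\healthy{R3}(\healthy{R2}_c(\healthy{R1}(P \relsemi Q))) = P \relsemi Q$ and discharge it in two stages, using commutativity to reorder the conditions freely. For the inner $\healthy{R1}$ and $\healthy{R2}_c$ layers, since $P$ and $Q$ are $\healthy{R1}$- and $\healthy{R2}_c$-healthy, Theorem~\ref{thm:r1r2seq} gives $\healthy{R1}(\healthy{R2}_c(P \relsemi Q)) = P \relsemi Q$, which by commutativity is equally $\healthy{R2}_c(\healthy{R1}(P \relsemi Q)) = P \relsemi Q$. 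For the outer $\healthy{R3}$ layer, since $P$ and $Q$ are $\healthy{R3}$-healthy, the established $\healthy{R3}$ closure gives $\healthy{R3}(P \relsemi Q) = P \relsemi Q$. Composing, $\healthy{R}(P \relsemi Q) = \healthy{R3}(\healthy{R2}_c(\healthy{R1}(P \relsemi Q))) = \healthy{R3}(P \relsemi Q) = P \relsemi Q$, which is the required closure.

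The genuinely non-trivial content — how two sequentially composed processes split the trace contribution without interference, and the case analysis on $wait$ needed for $\healthy{R3}$ — has all been pushed into the prior theorems, so the remaining argument is purely a matter of assembling those results. Consequently I do not expect a serious obstacle. The one point requiring care is the first step: justifying that $\healthy{R}$-healthiness of $P$ and $Q$ licenses treating them as individually healthy for each component. This rests on the commutation and idempotence lemmas already marshalled for Theorem~\ref{thm:reah-im}, so it is available, but it must be invoked explicitly rather than taken for granted.
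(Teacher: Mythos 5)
Your proposal is correct and matches the paper's intended argument: the paper derives this theorem directly from Theorem~\ref{thm:r1r2seq} together with the previously mechanised closure of $\healthy{R3}$, exactly as you do. Your explicit remark that $\carrier{\healthy{R}}$ decomposes as the intersection of the individual fixed-point sets (via idempotence and commutativity) is the same bookkeeping the paper leaves implicit, so there is no substantive difference in approach.
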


\noindent We have now shown that reactive processes are closed under the lattice and relational operators, and can use
these results to demonstrate the algebraic nature of the theory, by showing that reactive processes form a weak unital
quantale.

\begin{theorem} $\healthy{R}$ predicates form a weak unital quantale. Provided $A \subseteq \carrier{\healthy{R}}$ and $A \neq \emptyset$ the following
  laws hold:
  \begin{align*}
    P \relsemi (\thinf{R} \, A) &= (\thinf{R} Q\in A @ P \relsemi Q) \tag{Q1} \label{law:Q1} \\
    (\thinf{R} \, A) \relsemi Q &= (\thinf{R} P\in A @ P \relsemi Q) \tag{Q2} \label{law:Q2} \\
    P \relsemi \II &=  \II \relsemi P = P \tag{Q3} \label{law:Q3}
  \end{align*}
\end{theorem}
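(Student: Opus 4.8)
The plan is to prove that $\healthy{R}$ predicates form a weak unital quantale by establishing the three laws \ref{law:Q1}, \ref{law:Q2}, and \ref{law:Q3} individually, relying on the closure and lattice results already established. The distinguishing feature from an ordinary quantale is the word \emph{weak}: the distributivity laws are required only for \emph{nonempty} index sets $A$, and the unit is $\II$, which is the relational identity rather than an $\healthy{R}$-healthy predicate in general. Throughout, I would work inside the complete lattice $\carrier{\healthy{R}}$, so that $\thinf{R} \, A$ denotes the infimum computed in this sublattice, which by the earlier lattice theorem and the Knaster--Tarski argument coincides with $\healthy{R}$ applied to the underlying predicate-lattice infimum $\bigsqcap A$.

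First I would dispatch \ref{law:Q3}. Since $\relsemi$ distributes over itself and $\II$ is a relational unit, the underlying predicate-calculus facts $P \relsemi \II = \II \relsemi P = P$ hold immediately by the definition of sequential composition and the one-point rule. The only subtlety is that $P$ is taken from $\carrier{\healthy{R}}$, so I must confirm these identities survive inside the theory; this is where I would invoke the $\healthy{R}$ sequential closure theorem to ensure the compositions land back in $\carrier{\healthy{R}}$, and note that $\II$ behaves as a unit on healthy predicates because $\healthy{R}(P) = P$ absorbs the trivial trace and wait constraints contributed by the identity.

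Next I would prove the two distributivity laws \ref{law:Q1} and \ref{law:Q2}. The key tool is the general fact that sequential composition, being defined through existential quantification over an intermediate state, distributes through arbitrary nonempty infima of predicates: $P \relsemi (\bigsqcap_{Q \in A} Q) = \bigsqcap_{Q \in A} (P \relsemi Q)$ and symmetrically on the left. This is a standard relational-calculus identity. The work is then to transfer this from the raw predicate lattice to the $\healthy{R}$-sublattice infimum $\thinf{R}$. Concretely, for \ref{law:Q1} I would rewrite $\thinf{R} \, A = \healthy{R}(\bigsqcap A)$, push $P \relsemi$ inside using relational distributivity together with the trace-contribution decomposition from the $\healthy{R1}$-$\healthy{R2}_c$ sequential theorem, and then re-assemble using closure to recognise the result as $\thinf{R} Q \in A @ (P \relsemi Q)$. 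The nonemptiness of $A$ is essential here: it guarantees the infima are genuine and prevents the degenerate top element from breaking the correspondence between the sublattice and predicate-lattice infima.

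The main obstacle will be justifying that $\thinf{R}$ commutes with sequential composition, rather than the raw predicate infimum. The cleanest route is to combine the $\healthy{R}$ sequential closure theorem (so each $P \relsemi Q$ is already healthy and the reassembled infimum is computed correctly in the sublattice) with monotonicity of $\relsemi$ and of $\healthy{R}$ from Theorem~\ref{thm:reah-im}; this lets me argue that applying $\healthy{R}$ before or after the composition yields the same fixed-point. I expect the whole argument to reduce, as the authors indicate elsewhere, to relational calculus plus the trace-algebra laws of Theorems~\ref{thm:trsubtract} and following, with the weakness restriction $A \neq \emptyset$ doing exactly the work needed to make the lattice-level and predicate-level infima agree on healthy predicates.
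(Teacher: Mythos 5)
Your skeleton matches the paper's: you correctly identify that $\thinf{R} \, A = \healthy{R}(\bigsqcap A)$, that sequential composition distributes through non-empty predicate-lattice infima by pure relational calculus, and that closure under $\relsemi$ lets you recognise the reassembled infimum as a $\thinf{R}$. But there is a genuine gap at exactly the point you flag as the ``main obstacle''. You propose to bridge the sublattice infimum and the predicate infimum using \emph{monotonicity} of $\healthy{R}$ (Theorem~\ref{thm:reah-im}) together with idempotence and closure. Monotonicity only yields one direction: from $\bigsqcap A \refinedby Q$ for each $Q \in A$ you get $\healthy{R}(\bigsqcap A) \refinedby \healthy{R}(Q) = Q$, hence $\healthy{R}(\bigsqcap A) \refinedby \bigsqcap A$. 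The reverse inequality --- which is what you need to conclude $\healthy{R}(\bigsqcap A) = \bigsqcap A$ for a non-empty set of healthy predicates, and again to collapse $\healthy{R}(\bigsqcap_{Q \in A}(P \relsemi Q))$ onto $\bigsqcap_{Q \in A}(P \relsemi Q)$ --- does not follow from monotonicity and idempotence: the image of an idempotent monotonic function need not be closed under non-empty infima. The missing lemma, which is precisely what the paper's proof reduces everything to, is that $\healthy{R}$ is \emph{continuous}, i.e.\ distributes through non-empty infima. This must be established separately, by checking that each of $\healthy{R1}$, $\healthy{R2}_c$ and $\healthy{R3}$ distributes through non-empty disjunction (infima under $\refinedby$ being disjunctions): conjunction with $tr \le tr'$, substitution, and the conditional all distribute.

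Two smaller points. First, your account of why $A \neq \emptyset$ is needed (``prevents the degenerate top element from breaking the correspondence'') is on the right track but can be made precise once continuity is in view: $\healthy{R3}$ is the culprit, since $\healthy{R3}$ applied to the empty disjunction produces $wait \land \II$ rather than the empty disjunction, so continuity genuinely fails for $A = \emptyset$, whereas $P \relsemi$ maps the empty infimum to itself. Second, your appeal to the $\healthy{R1}$-$\healthy{R2}_c$ trace-contribution decomposition inside the distributivity argument is unnecessary: distribution of $\relsemi$ through $\bigsqcap$ is a trace-independent fact about existential quantification, and the trace algebra plays no role in \ref{law:Q1}--\ref{law:Q3} beyond what is already packaged into the closure theorems.
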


\begin{proof}
  Since $\thinf{R} \, A = \healthy{R}(\bigsqcap A)$ and sequential composition left and right distributes through
  $\bigsqcap$ it suffices, to show that $\healthy{R}$ is continuous: it distributes through non-empty infima.
\end{proof}

\noindent \ref{law:Q1} and \ref{law:Q2} are the quantale laws, which state that sequential composition distributes
through infima. The requirement of non-emptiness is why the quantale is called ``weak''. Finally, \ref{law:Q3} makes the weak
quantale unital. Unital quantales are an important algebraic structure that give rise to Kleene
algebras~\cite{Armstrong2015}. They augment a complete lattice with the laws above, the combination of which provides a
minimal algebraic foundation for substantiating the point-free laws of sequential programming~\cite{Armstrong2015}.

Our final result is closure under parallel composition. The UTP provides an operator called
\emph{parallel-by-merge}~\cite{Hoare&98}, $P \parallel_M Q$, whereby the composition of processes $P$ and $Q$ separates
their states, calculates their independent concurrent behaviours, and then merges the results. The operator is
parametric over merge predicate $M$ that specifies how synchronisation is performed. Different programming language
semantics require formation of a bespoke merge predicate depending on their concurrency scheme. We give a slightly
simplified version of the UTP definition, which is nevertheless equivalent.

\begin{definition}[Parallel-by-merge]
  $$P \parallel_M Q ~\defs~ (\psep{P}{0} \land \psep{Q}{1} \land v' = v) \relsemi M$$
\end{definition}

\noindent Operator $\psep{P}{n}$ augments the after variables of $P$ with an index; for example:
$$\psep{x' = 7 \cdot y}{0} = (0.x' = 7 \cdot y)$$ The three conjuncts rename the after variables of $P$ and $Q$ to
ensure no clashes, and copy all before variables ($v$) to after variables, respectively. Thus, $M$ has access to the
state of each variable before execution ($v$), and from the respective composed processes ($0.v$ and $1.v$). Merge
predicate $M$ can then invoke $tr' = f(0.tr, 1.tr)$ with a suitable trace merge function $f$, such as interleaving.

The healthiness conditions $\healthy{R1}$ and $\healthy{R3}$ can be directly applied to $M$, modulo some differences in
alphabet. $\healthy{R2}_c$ requires adaptation as it is possible to access the trace history through the two indexed
traces, $0.tr$ and $1.tr$, in addition to $tr$. It is, therefore, necessary to delete the history from the two in the
revised healthiness condition $\healthy{R2}_m$ below.

\begin{definition}[$\healthy{R2}_c$ for merge predicates]
  \begin{align*}
    \healthy{R2}_m(M) &\defs (P[\tempty,tr'\!-\!tr,0.tr\!-\!tr,1.tr\!-\!tr \\
                      & \qquad~~ /tr,tr',0.tr,1.tr]) \infixIf tr \le tr' \infixElse P
  \end{align*}
\end{definition}

\noindent $\healthy{R2}_m$ has the same form as $\healthy{R2}_c$ except that it deletes the history of three extant
traces, $tr'$, $0.tr$, and $1.tr$. From $M$'s perspective, $0.tr$ and $1.tr$ contain the trace the parallel processes
have executed. Thus we need to delete the history, through substitution, from these as well so that they contain only
the contributions of their respective processes. This allows us to show that the overall composition is
$\healthy{R2}_c$. We define a condition for merge predicates -- $\healthy{R}_m \defs \healthy{R1} \circ \healthy{R2}_m
\circ \healthy{R3}$ -- and prove the following final theorem.

\begin{theorem}
  $P \parallel_M Q$ is $\healthy{R}$ healthy provided that $P, Q$ are $\healthy{R}$ healthy, and $M$ is
  $\healthy{R}_m$ healthy.
\end{theorem}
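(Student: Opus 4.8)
The plan is to reduce $\healthy{R}$-closure to separate closure results for $\healthy{R1}$, $\healthy{R2}_c$, and $\healthy{R3}$. Since $\healthy{R} = \healthy{R3} \circ \healthy{R2}_c \circ \healthy{R1}$ is built from idempotent conditions that pairwise commute (the property already used for Theorem~\ref{thm:reah-im}), a predicate is $\healthy{R}$-healthy exactly when it is a fixed point of all three: if $P = \healthy{R}(P)$ then commutativity and idempotence give $\healthy{R1}(P) = P$, and likewise for the other two, while the converse is immediate. It therefore suffices to show that each of the three conditions is preserved by $\parallel_M$ under the stated hypotheses. I would unfold $P \parallel_M Q = (\psep{P}{0} \land \psep{Q}{1} \land v' = v) \relsemi M$ and treat each condition in turn.

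For $\healthy{R1}$, I would use that $\healthy{R1}(M) = M$ (the first component of $\healthy{R}_m$) supplies $tr \le tr'$ on the right of the sequential composition, while $\healthy{R1}$-healthiness of $P$ and $Q$ supplies $tr \le 0.tr'$ and $tr \le 1.tr'$ on the left. Threading these through the intermediate state and appealing to transitivity of $\le$ from \ref{law:TP1} then yields $tr \le tr'$ for the whole composition. $\healthy{R3}$ is handled analogously: $\healthy{R3}(M) = M$ (the last component of $\healthy{R}_m$) guarantees that the merge reduces to $\II$ when $wait$ holds, and since the left factor copies $wait$ into the intermediate state, the composition collapses to $\II$ under $wait$, as required.

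The hard case is $\healthy{R2}_c$. Here I would rewrite the factors using the trace-contribution form of Theorem~\ref{thm:trcontr}, so that $\psep{P}{0}$ and $\psep{Q}{1}$ constrain only trace extensions, with intermediate values $0.tr = tr \tcat t_0$ and $1.tr = tr \tcat t_1$ fed to $M$. The role of $\healthy{R2}_m(M) = M$ is then precisely to delete the common history from all three traces $tr'$, $0.tr$, and $1.tr$ simultaneously, so that only the differences $tr' \tminus tr$, $0.tr \tminus tr$, and $1.tr \tminus tr$ remain constrained after the merge. The crux is a substitution-and-subtraction calculation establishing that deleting the history \emph{before} the merge (via $\healthy{R2}_m$) agrees with deleting it \emph{after} the merge (via $\healthy{R2}_c$); I would discharge this by repeated appeal to the subtraction laws \ref{law:tscat}, \ref{law:TS6}, and \ref{law:TS8}, which rewrite each expression of the form $x \tminus tr$ against the extension equalities. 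Aligning these two history-deletion substitutions---pre-merge on three traces versus post-merge on one---is the main obstacle, as the substitutions must commute through the existential quantifier of the sequential composition and through the merge's use of $0.tr$ and $1.tr$.

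With the three closure facts in hand, I would conclude $\healthy{R}(P \parallel_M Q) = P \parallel_M Q$ by composing them in the order $\healthy{R3} \circ \healthy{R2}_c \circ \healthy{R1}$, each application acting as the identity by the corresponding result. As with the other theorems of this section, the decisive and most delicate step is the mechanised subtraction reasoning underlying the $\healthy{R2}_c$ case.
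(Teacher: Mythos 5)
Your proposal matches the paper's intended argument: the paper gives no written proof for this theorem (deferring to the Isabelle/UTP mechanisation), but its surrounding discussion makes clear that the strategy is exactly your decomposition into separate closure results for $\healthy{R1}$, $\healthy{R2}_c$, and $\healthy{R3}$, with $\healthy{R2}_m$ introduced precisely so that deleting the history from $tr'$, $0.tr$, and $1.tr$ before the merge agrees with applying $\healthy{R2}_c$ to the whole composition. Your identification of the $\healthy{R2}_c$ case as the crux, to be discharged via Theorem~\ref{thm:trcontr} and the subtraction laws, is consistent with what the mechanised proof must do.
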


\noindent Thus our generalised theory of reactive processes is conservative and unifies the
denotational semantics of concurrent programming.

\section{Conclusion}
Traces are ubiquitous in modelling of program history. Here, we have shown how a generalised foundation for their
semantics can be given in terms of a trace algebra, and presented some important models, notably piecewise-continuous
functions. Finally, we have applied it to reconstruct Hoare and He's model of reactive processes, with some important
additions of our own, including revision of $\healthy{R2}$, additional theorems about reactive relations, and lifting of
the healthiness conditions to parallel composition. All of the theorems described herein have been mechanised in
Isabelle/UTP~\cite{Foster16a}.

In the future we will apply this theory of reactive processes to give a new model to the UTP hybrid relational
calculus~\cite{Foster16b} that we have previously created to give denotational semantics to Modelica and
Simulink. Moreover, inspired by \cite{Oliveira&09}, we will use our theory to describe generalised reactive designs, a
UTP theory that justifies combined use of concurrent and assertional reasoning. This will enable the construction of
verification tools on top of our Isabelle/HOL embedding for concurrent and hybrid programming languages. 

We also aim to explore weakenings of the trace algebra and healthiness conditions to support larger classes of reactive
process semantics. For example, weakening of the trace cancellation laws could enable representation of infinite traces
in order to support reactive processes with unbounded nondeterminism. Moreover, $\healthy{R2}_c$ currently prevents a
process from depending on an absolute start time with respect to a global clock. In the future this could be relaxed,
either at the model or theory level, to support time variant real-time and hybrid processes.

\section*{Acknowledgements}

\noindent This work is funded by EU H2020 project ``INTO-CPS''\footnote{\url{http://into-cps.au.dk}}, grant agreement
644047. We would also like to thank Dr. Jeremy Jacobs, and also our anonymous reviewers, for their helpful feedback.

\bibliography{grproc}

\end{document}